\begin{document}

\newenvironment{proof}[1][Proof]{\textbf{#1.} }{\ \rule{0.5em}{0.5em}}

\newtheorem{theorem}{Theorem}[section]
\newtheorem{definition}[theorem]{Definition}
\newtheorem{lemma}[theorem]{Lemma}
\newtheorem{remark}[theorem]{Remark}
\newtheorem{proposition}[theorem]{Proposition}
\newtheorem{corollary}[theorem]{Corollary}
\newtheorem{example}[theorem]{Example}

\numberwithin{equation}{section}
\newcommand{\ep}{\varepsilon}
\newcommand{\R}{{\mathbb  R}}
\newcommand\C{{\mathbb  C}}
\newcommand\Q{{\mathbb Q}}
\newcommand\Z{{\mathbb Z}}
\newcommand{\N}{{\mathbb N}}

\newcommand{\bfi}{\bfseries\itshape}

\newsavebox{\savepar}
\newenvironment{boxit}{\begin{lrbox}{\savepar}
\begin{minipage}[b]{15.5cm}}{\end{minipage}\end{lrbox}
\fbox{\usebox{\savepar}}}

\title{{\bf A method to generate first integrals from infinitesimal symmetries}}
\author{R\u{a}zvan M. Tudoran}

\date{}
\maketitle \makeatother

\begin{abstract}
We propose a method to construct first integrals of a dynamical system, starting with a given set of linearly independent infinitesimal symmetries. In the case of two infinitesimal symmetries, a rank two Poisson structure on the ambient space it is found, such that the vector field that generates the dynamical system, becomes a Poisson vector field. Moreover, the symplectic leaves and the Casimir functions of the associated Poisson manifold are characterized. Explicit conditions that guarantee Hamilton-Poisson realizations of the dynamical system are also given. 
\end{abstract}

\medskip

\textbf{AMS 2010}:  35L65; 70H33; 70S10; 37K05.

\textbf{Keywords}: infinitesimal symmetry; first integral; Hamiltonian dynamics.

\section{Introduction}
\label{section:one}

The explicit knowledge of first integrals of a given dynamical system has proved to be of great importance in the study of the qualitative properties of the system, e.g. integrability (see e.g., \cite{arnoldcarte}, \cite{kozlov}, \cite{kozlov1}), stability (see e.g. \cite{abraham}, \cite{verhulst}, \cite{arnoldcarte}, \cite{lasalle}, \cite{moser}). Since there exist no general methods to determine the first integrals of a given dynamical system, specific approaches need to be found for each system, or class of systems, apart. The aim of this article is to provide a method to find first integrals of a given dynamical system, if one knows a given set of infinitesimal symmetries of the system. Recall that by infinitesimal symmetry of a given dynamical system $\dot x=X(x)$ (defined on some open subset $U$ of a smooth manifold $M$), we mean a smooth vector field $Y\in\mathfrak{X}(U)$ that commutes with $X$, i.e., $[X,Y]=0$, or equivalently $\mathcal{L}_X Y=0$, where $\mathcal{L}_X$ stands for the Lie derivative along the vector field $X$.

One of the oldest results in the literature that relates the existence of a family of vector fields associated to a given dynamical system, with the existence of first integrals of the system, is due to Lie and it states that if there exists $n$ linearly independent vector fields $X_1,\dots,X_n$ on $\R^n$ that generates a solvable Lie algebra under commutation, i.e., $[X_1,X_j]=c^{1}_{1,j}X_1$, $[X_2,X_j]=c^{1}_{1,j}X_1+c^{2}_{2,j}X_2$, $\dots$, $[X_n,X_j]=c^{1}_{1,j}X_1+c^{2}_{2,j}X_2+\dots+c^{n}_{n,j}X_n$, for $j\in\{1,\dots,n\}$, where $c^{k}_{i,j}$ are the structural constants of the Lie algebra, then the differential equation $\dot{x}=X_1(x)$ is integrable by quadratures (see e.g. \cite{arnold}). Moreover, by proving these quadratures one obtain $n-1$ functionally independent first integrals of the system $\dot{x}=X_1(x)$, and consequently one get his complete integrability (see e.g. \cite{kozlov1}). This result was recently improved by Kozlov, by proving that in fact each of the differential equation $\dot{x}=X_j(x)$ is integrable by quadratures (see \cite{kozlov}).

In contrast with Lie's method, in our approach there is no relation between the dimension of the configuration manifold and the number of infinitesimal symmetries, and moreover one obtains explicitly the first integrals directly form the commutation relations.  More precisely, the method proposed in the present article, states the following. 

\bigskip
Let $\dot x= X(x)$ be a dynamical system, where $X\in\mathfrak{X}(U)$ is a smooth vector field defined eventually on an open subset $U\subseteq M$ of a smooth manifold $M$, such that the following conditions hold true:
\begin{itemize}
\item there exist $p\in\mathbb{N}$, $p>0$, and some linearly independent infinitesimal symmetries of $X$, denoted by $X_1,\dots,X_p \in \mathfrak{X}(U)$,
\item there exist smooth functions $F_{ij}^{k}\in\mathcal{C}^{\infty}(U,\mathbb{R})$, $i,j,k\in\{1,\dots,p\}$, such that for each $i,j\in\{1,\dots,p\}$:
\begin{equation*}
[X_i,X_j]=F_{ij}^{1}X_1 +\dots+ F_{ij}^{p}X_p.
\end{equation*}
\end{itemize}
Then, 
\begin{itemize}
\item for each  $i,j,k\in\{1,\dots,p\}$, the smooth function $F_{ij}^{k}\in\mathcal{C}^{\infty}(U,\mathbb{R})$, is a first integral of the dynamical system $\dot x= X(x),$
\item moreover, for each $i,j,k,l\in\{1,\dots,p\}$, the smooth function $\mathcal{L}_{X_l}F_{ij}^{k}\in\mathcal{C}^{\infty}(U,\mathbb{R})$, is a first integral of the dynamical system $\dot x= X(x)$.
\end{itemize}

\bigskip
In the case of two linearly independent infinitesimal symmetries $X_1,X_2$, the picture behind the main result can be described in terms of Poisson geometry. More precisely, in the hypothesis of the main result, the infinitesimal symmetries $X_1,X_2$, generate a bivector field $X_1 \wedge X_2 \in \mathfrak{X}^{2}(U)$ which defines a rank two Poisson structure on $U$, whose set of Casimir invariants is $\operatorname{Cas}(X_1 \wedge X_2)=\ker{\mathcal{L}_{X_1}} \cap \ker{\mathcal{L}_{X_2}}$, or equivalently, in terms of Lichnerowicz-Poisson cohomology, $H^{0}_{X_1 \wedge X_2}(U)=\ker{\mathcal{L}_{X_1}} \cap \ker{\mathcal{L}_{X_2}}$. Consequently, the vector field $X$, who generates the dynamical system $\dot x= X(x)$, becomes a Poisson vector field with respect to the Poisson structure $X_1 \wedge X_2$. Moreover, if there exists a smooth function $H\in\mathcal{C}^{\infty}(U,\mathbb{R})$ such that $X=(\mathcal{L}_{X_2}H)X_1 -(\mathcal{L}_{X_1}H)X_2$, then the dynamical system $\dot x= X(x)$ will be a Hamiltonian dynamical system modeled on the Poisson manifold $(U,X_1 \wedge X_2)$, with respect to the Hamiltonian function $H\in\mathcal{C}^{\infty}(U,\mathbb{R})$. Since the vector field $X$ is a Poisson vector field, $X$ will be also a Hamiltonian vector field, if the first Lichnerowicz-Poisson cohomology space of the Poisson manifold $(U,X_1 \wedge X_2)$ is trivial, i.e., $H^{1}_{X_1 \wedge X_2}(U)=0$. In the case when $U$ is two-dimensional, then we obtain that $H^{1}_{X_1 \wedge X_2}(U)\cong H^{1}_{dR}(U)$, and hence if $H^{1}_{dR}(U)=0$, then the dynamical system $\dot x= X(x)$ will be a Hamiltonian dynamical system on $U$.

\section{First integrals generated by infinitesimal symmetries}

In this section we provide an explicit method of constructing first integrals of a dynamical system, if one have a given set of infinitesimal symmetries. 

Let us start by recalling that by a \textit{first integral} of a dynamical system $\dot x= X(x)$, $X\in\mathfrak{X}(U)$ (defined eventually on an open subset $U\subseteq M$ of a smooth manifold $M$), we mean a smooth function $F\in\mathcal{C}^{\infty}(U,\mathbb{R})$ such that $\mathcal{L}_{X} F =0$, where the notation $\mathcal{L}_{X}$ stands for the Lie derivative along the vector field $X$. 

On the other hand, by \textit{infinitesimal symmetry} of the dynamical system $\dot x= X(x)$, we mean a smooth vector field $Y\in\mathfrak{X}(U)$ such that $[X,Y]=0$. Since $[X,Y]=\mathcal{L}_{X} Y$, one can rewrite the condition $[X,Y]=0$ in terms of Lie derivative as, $\mathcal{L}_{X} Y=0$.
 
In practice, infinitesimal symmetries are more easily found than first integrals. Nevertheless, first integrals are extremely useful for the qualitative study of the dynamical system, e.g., for the study of stability properties (see e.g., \cite{abraham}, \cite{verhulst}, \cite{arnoldcarte}, \cite{lasalle}, \cite{moser}), for the study of the integrability ( see e.g., \cite{arnoldcarte}, \cite{kozlov}, \cite{kozlov1}). Although, there are no general methods to determine the first integrals of a dynamical system.  

Let us now state the main result of this section which provides a method to construct first integrals for the dynamical system $\dot x= X(x)$ out of a given set of infinitesimal symmetries. 

\begin{theorem}\label{MT}
Let $\dot x= X(x)$ be a dynamical system, where $X\in\mathfrak{X}(U)$ is a smooth vector field defined eventually on an open subset $U\subseteq M$ of a smooth manifold $M$, such that the following conditions hold true:
\begin{enumerate}
\item there exists $p\in\mathbb{N}$, $p>0$, and some linearly independent infinitesimal symmetries of $X$, denoted by $X_1,\dots,X_p \in \mathfrak{X}(U)$,
\item there exist smooth functions $F_{ij}^{k}\in\mathcal{C}^{\infty}(U,\mathbb{R})$, $i,j,k\in\{1,\dots,p\}$, such that for each $i,j\in\{1,\dots,p\}$:
\begin{equation}\label{esy}
[X_i,X_j]=F_{ij}^{1}X_1 +\dots+ F_{ij}^{p}X_p.
\end{equation}
\end{enumerate}
Then, 
\begin{enumerate}
\item for each  $i,j,k\in\{1,\dots,p\}$, the smooth function $F_{ij}^{k}\in\mathcal{C}^{\infty}(U,\mathbb{R})$, is a first integral of the dynamical system $\dot x= X(x),$
\item moreover, for each $i,j,k,l\in\{1,\dots,p\}$, the smooth function $\mathcal{L}_{X_l}F_{ij}^{k}\in\mathcal{C}^{\infty}(U,\mathbb{R})$, is a first integral of the dynamical system $\dot x= X(x)$.
\end{enumerate}
\end{theorem}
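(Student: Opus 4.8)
The plan is to exploit the Jacobi identity for the Lie bracket of vector fields together with the symmetry hypothesis $[X,X_i]=0$. For the first assertion I would start from the Jacobi identity
\[
[X,[X_i,X_j]]+[X_i,[X_j,X]]+[X_j,[X,X_i]]=0,
\]
and observe that the last two brackets vanish because each $X_i$ is an infinitesimal symmetry of $X$, i.e.\ $[X,X_i]=[X,X_j]=0$. This immediately yields $[X,[X_i,X_j]]=0$, so the bracket of two symmetries is again a symmetry of $X$.

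Next I would substitute the assumed decomposition $[X_i,X_j]=\sum_{k=1}^{p}F_{ij}^{k}X_k$ into this relation and expand using the standard Leibniz rule $[X,fY]=(\mathcal{L}_X f)Y+f[X,Y]$, valid for a smooth function $f$ and a vector field $Y$. Since $[X,X_k]=0$ for every $k$, all the terms $F_{ij}^{k}[X,X_k]$ drop out and one is left with
\[
0=[X,[X_i,X_j]]=\sum_{k=1}^{p}(\mathcal{L}_X F_{ij}^{k})\,X_k.
\]
At this point the linear independence of $X_1,\dots,X_p$ forces each coefficient to vanish, giving $\mathcal{L}_X F_{ij}^{k}=0$, i.e.\ $F_{ij}^{k}$ is a first integral of $X$.

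For the second assertion I would use the commutation formula for Lie derivatives of functions, $\mathcal{L}_X\mathcal{L}_{X_l}F-\mathcal{L}_{X_l}\mathcal{L}_X F=\mathcal{L}_{[X,X_l]}F$. Applying this with $F=F_{ij}^{k}$ and using once more that $[X,X_l]=0$, the two Lie derivatives commute, so
\[
\mathcal{L}_X(\mathcal{L}_{X_l}F_{ij}^{k})=\mathcal{L}_{X_l}(\mathcal{L}_X F_{ij}^{k})=\mathcal{L}_{X_l}(0)=0,
\]
where the middle equality invokes the first assertion. Hence $\mathcal{L}_{X_l}F_{ij}^{k}$ is again a first integral of $X$.

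The main subtlety, and the step I would be most careful about, is the passage from $\sum_{k}(\mathcal{L}_X F_{ij}^{k})X_k=0$ to the vanishing of each coefficient: this requires the $X_k$ to be linearly independent over the ring $\mathcal{C}^{\infty}(U,\mathbb{R})$ (equivalently, pointwise linearly independent at each point of $U$), rather than merely over $\mathbb{R}$. I would therefore make explicit that ``linearly independent'' is to be understood in this stronger, pointwise sense; should independence hold only on a dense open subset of $U$, the conclusion still extends to all of $U$ by continuity of the smooth functions $\mathcal{L}_X F_{ij}^{k}$.
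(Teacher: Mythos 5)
Your proof is correct and follows essentially the same route as the paper: the Jacobi identity plus $[X,X_i]=[X,X_j]=0$ to show $[X_i,X_j]$ is again a symmetry, the Leibniz rule and linear independence to kill each coefficient $\mathcal{L}_X F_{ij}^{k}$, and the commutator of Lie derivatives with $[X,X_l]=0$ for the second assertion. Your closing remark that ``linearly independent'' must be read pointwise (over $\mathcal{C}^{\infty}(U,\mathbb{R})$ rather than over $\mathbb{R}$) is a worthwhile clarification that the paper leaves implicit, but it does not change the argument.
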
 
\begin{proof}
Let us fix $i,j\in\{1,\dots,p\}$. We will show that for each $k\in\{1,\dots,p\}$, the smooth function $F_{ij}^{k}\in\mathcal{C}^{\infty}(U,\mathbb{R})$ is a first integral of the dynamical system $\dot x= X(x)$, namely, $\mathcal{L}_{X}F_{ij}^{k}=0$.

Using the Jacobi identity of the Jacobi-Lie bracket of vector fields for the triple $X, X_i, X_j \in\mathfrak{X}(U)$ we obtain:
\begin{equation}\label{Jid}
[[X,X_i],X_j]+[[X_i,X_j],X]+[[X_j,X],X_i]=0.
\end{equation}

Since $X_i,X_j \in \mathfrak{X}(U)$ are infinitesimal symmetries of the dynamical system $\dot x= X(x)$, we obtain that the vector field $X$ commutes with both vector fields $X_i$ and $X_j$, namely $[X,X_i]=[X,X_j]=0$. Hence, the identity \eqref{Jid} becomes
\begin{equation}\label{dg}
[[X_i,X_j],X]=0,
\end{equation}
namely, $[X_i,X_j]$ is an infinitesimal symmetry of $\dot x= X(x)$, too.

Using the second relation from hypothesis, namely $[X_i,X_j]=F_{ij}^{1}X_1 +\dots+ F_{ij}^{p}X_p$, the equality \eqref{dg} can be written as follows
\begin{equation}\label{dgj}
[F_{ij}^{1}X_1,X] +\dots+ [F_{ij}^{p}X_p,X]=0.
\end{equation}

Note that for each $k\in\{1,\dots,p\}$, 
$$
[F_{ij}^{k}X_k,X]=F_{ij}^{k} [X_k,X]-(\mathcal{L}_{X}F_{ij}^{k}) X_k =-(\mathcal{L}_{X}F_{ij}^{k}) X_k.
$$

Consequently, the equality \eqref{dgj} becomes
\begin{equation*}
-(\mathcal{L}_{X}F_{ij}^{1}) X_1-\dots -(\mathcal{L}_{X}F_{ij}^{p}) X_p =0,
\end{equation*}
and since $X_1,\dots,X_p$ are linearly independent we obtain $\mathcal{L}_{X}F_{ij}^{1} =\dots =\mathcal{L}_{X}F_{ij}^{p}=0$, and hence we obtain the first part of the conclusion.

In order to prove the second part of the conclusion, let us fix $i,j,k,l\in\{1,\dots,p\}$. Since $X_l \in  \mathfrak{X}(U)$ is an infinitesimal symmetry of the dynamical system $\dot x= X(x)$, by definition we have that $[X,X_l]=0$, and consequently $\mathcal{L}_{[X,X_l]}F_{ij}^{k}=0$. 

Rewriting the last equality as
$$
\mathcal{L}_{X}(\mathcal{L}_{X_l} F_{ij}^{k}) - \mathcal{L}_{X_l}(\mathcal{L}_{X} F_{ij}^{k})=0,
$$
the conclusion follows from the fact that $\mathcal{L}_{X} F_{ij}^{k}=0$. 
\end{proof}

\medskip
Note that using the same idea as for the proof of the second statement of Theorem \eqref{MT}, for any given first integral of the dynamical system $\dot x= X(x),$ one can construct by induction a sequence of first integrals. Unfortunately, as can be seen in examples, most of these first integrals are not independent.   
\begin{remark}
Let $\dot x= X(x)$ be a dynamical system, where $X\in\mathfrak{X}(U)$ is a smooth vector field defined eventually on an open subset $U\subseteq M$ of a smooth manifold $M$. Suppose there exist $X_1,\dots,X_p \in \mathfrak{X}(U)$, infinitesimal symmetries of $X$. 

If $F\in\mathcal{C}^{\infty}(U,\mathbb{R})$ is an arbitrary given first integral of the dynamical system $\dot x= X(x),$ then to each sequence $(a_n)_{n\in\mathbb{N}}\in\{1,\dots,p\}^{\mathbb{N}}$, one can associate a sequence of smooth functions $(F_n)_{n\in\mathbb{N}}\subset\mathcal{C}^{\infty}(U,\mathbb{R})$, given by $$F_n=\mathcal{L}_{X_{a_n}}(\dots(\mathcal{L}_{X_{a_1}}( \mathcal{L}_{X_{a_0}}F))\dots),$$ whose terms are first integrals of the dynamical system $\dot x= X(x).$
\end{remark}

\bigskip
Let us now point out some properties regarding the geometric picture associated with the Theorem \eqref{MT}. Before stating the results, let us recall some notations. Note that if $X_1,\dots,X_p \in  \mathfrak{X}(U)$ are linearly independent smooth vector fields defined on the open subset $U\subseteq M$ of a smooth manifold $M$, then the smooth assignment $$x\in U \mapsto \langle X_1 (x),\dots,X_p (x)\rangle:=\operatorname{span}_{\mathbb{R}}\{X_1(x),\dots,X_p (x)\}\subseteq T_x U,$$
determines a $p-$dimensional smooth distribution on $U$, denoted by $\langle X_1 ,\dots,X_p \rangle$.
\begin{remark}\label{RMT}
In the hypothesis of Theorem \eqref{MT}, the Frobenius theorem implies the integrability of the $p-$dimensional distribution $\langle X_1 ,\dots,X_p \rangle$, generated by the infinitesimal symmetries $X_1 ,\dots,X_p \in \mathfrak{X}(U)$ of the dynamical system $\dot x =X(x)$, $x\in U$.

Let us now distinguish between two cases according if the smooth vector fields $X$, $X_1$, $\dots$, $X_p$ are, or are not, linearly independent on the open subset $U\subseteq M$. 
\begin{itemize}
\item If $X$ is linearly dependent with respect to $X_1,\dots,X_p$, then
 $\langle X, X_1 ,\dots,X_p \rangle = \langle X_1 ,\dots,X_p \rangle$, and hence the vector field $X$ is tangent to the leaves of the foliation induced by the integrable distribution generated by the infinitesimal symmetries $X_1 ,\dots,X_p$.
\item If $X, X_1 ,\dots,X_p$, are linearly independent, then by using again the Frobenius theorem, we obtain that $\langle X, X_1 ,\dots,X_p \rangle$ is a $(p+1)-$dimensional integrable distribution on $U$. In this case, for each $x\in U$ there exists an open subset $U_x \subseteq U$ such that the following diffeomorphic splitting holds
$$
\Sigma_{x}^{\langle X, X_1 ,\dots,X_p \rangle}\cap U_{x} \cong \left(\Sigma_{x}^{\langle X_1 ,\dots,X_p \rangle}\cap U_{x}\right) \times \left(\Sigma_{x}^{\langle X \rangle}\cap U_{x}\right),
$$
where $\Sigma_{x}^{D}$ stands for the connected component which contains $x$, of the leaf passing through $x$, of the foliation induced by the integrable distribution $D$.
\end{itemize}
\end{remark}

\section{Some generalities on Poisson manifolds}

Since the rest of this work is concerned with the description of the geometric picture behind the Theorem \eqref{MT} for the special case of two linearly independent infinitesimal symmetries, and the associated geometry proves to be the Poisson geometry, in order to have a self contained presentation, we dedicate this section to recall some generalities on Poisson geometry. The results presented here follows \cite{vaisman}, \cite{dufur}, \cite{naka}, \cite{MRT}, \cite{ratiurazvan}.

Let $M$ be an $n-$dimensional smooth manifold. Denote by $$L(M)=\left(\oplus_{i=0}^{n}\mathfrak{X}^{i}(M),\wedge\right)$$ the contravariant Grassmannn algebra of $M$, where for each $i\in\{0,\dots, n\}$, $\mathfrak{X}^{i}(M)$ stands for the space of smooth $i-$vector fields of $M$. Recall that by convention $\mathfrak{X}^{0}(M)=\mathcal{C}^{\infty}(M,\mathbb{R})$. 

Recall that the Schouten-Nijenhuis bracket is a generalization of the Jacobi-Lie bracket of ($1-$)vector fields, and it is a homogeneous bi-derivation of degree $-1$ defined on $L(M)$
$$(T,S)\in \mathfrak{X}^{i}(M) \times \mathfrak{X}^{j}(M) \mapsto [T,S]\in \mathfrak{X}^{i+j-1}(M),$$
characterized by the following properties:
\begin{enumerate}
\item $[F,G]=0$ for every $F,G \in \mathfrak{X}^{0}(M)$,
\item $[X,F]=\mathcal{L}_{X}F$ for every $F\in \mathfrak{X}^{0}(M)$, $X\in \mathfrak{X}^{1}(M)$,
\item $[X,Y]=\mathcal{L}_{X}Y$ for every $X,Y\in \mathfrak{X}^{1}(M)$,
\item $[T,S\wedge R]=[T,S]\wedge R + (-1)^{(t-1)s}S\wedge [T,R]$, for every $T\in \mathfrak{X}^{t}(M)$, $S\in \mathfrak{X}^{s}(M)$, $R\in \mathfrak{X}^{r}(M)$,
\item $[T,S]=(-1)^{(t-1)(s-1)+1}[S,T]$, for every $T\in \mathfrak{X}^{t}(M)$, $S\in \mathfrak{X}^{s}(M)$,
\item $(-1)^{(t-1)(r-1)}[T,[S,R]]+(-1)^{(s-1)(t-1)}[S,[R,T]]+(-1)^{(r-1)(s-1)}[R,[T,S]]=0$, for every $T\in \mathfrak{X}^{t}(M)$, $S\in \mathfrak{X}^{s}(M)$, $R\in \mathfrak{X}^{r}(M)$.
\end{enumerate}

The Schouten-Nijenhuis bracket provides a simple characterization of Poisson structures. Let us recall that a \textit{Poisson bracket} on a smooth manifold $M$, is an $\mathbb{R}-$bilinear map $$(F,G)\in\mathcal{C}^{\infty}(M,\mathbb{R})\times \mathcal{C}^{\infty}(M,\mathbb{R}) \mapsto \{F,G\} \in \mathcal{C}^{\infty}(M,\mathbb{R}),$$ which is also skew-symmetric, verifies the Jacobi identity, and respectively the Leibniz rule of derivation. In other words, the algebra $\mathcal{C}^{\infty}(M,\mathbb{R})$ becomes a \textit{Poisson algebra}, i.e., a commutative algebra together with a Lie bracket that also satisfies the Leibniz rule of derivation.

Note that for each arbitrary smooth function $F\in\mathcal{C}^{\infty}(M,\mathbb{R})$, since the Poisson bracket verifies the Leibniz rule of derivation, there exists a smooth vector field, $X_F$, called the \textit{Hamiltonian vector filed of $F$}, such that
$$\{F,G\}=\mathcal{L}_{X_F} G = - \mathcal{L}_{X_G} F =\langle dG, X_F \rangle =-\langle dF, X_G \rangle,$$ for every smooth function $G\in\mathcal{C}^{\infty}(M,\mathbb{R})$, where $\langle\cdot,\cdot\rangle$ stands for the natural pairing between smooth one-forms and smooth vector fields, i.e., $\langle \alpha, X \rangle=\alpha\cdot X = \alpha(X)$, for any smooth one-form $\alpha$ and respectively any smooth vector field $X$.

Hence, the above relation implies that the Poisson bracket is determined by a smooth bivector field, $\Pi\in\mathfrak{X}^{2}(M)$, via the natural pairing between smooth two-forms and bivector fields, namely for every $F,G\in \mathcal{C}^{\infty}(M,\mathbb{R})$
\begin{equation}\label{PB}
\{F,G\}=\langle dF \wedge dG, \Pi \rangle.
\end{equation}
The bivector field $\Pi$ it is called \textit{Poisson bivector field}, or \textit{Poisson structure}. Note that due to the characteristic properties of the Poisson bracket, a Poisson bivector field need to meet some special requirements. The pair $(M,\Pi)$, consisting of a manifold $M$, together with a Poisson structure $\Pi$, it is called \textit{Poisson manifold}. Given a Poisson structure $\Pi$ on $M$, one can associate a natural homomorphism $$\alpha\in \Gamma(T^{\star}M) \mapsto \Pi^{\sharp}(\alpha)\in \Gamma(TM),$$ where $\langle \alpha \wedge \beta, \Pi \rangle=\langle \beta, \Pi^{\sharp}(\alpha)\rangle$, for any one-form $\beta\in \Gamma(T^{\star}M)$. Recall that the  \textit{rank} of the Poisson structure $\Pi$ at a point $x\in M$, $\operatorname{rank}(\Pi(x))$, is the dimension of the vector space $\operatorname{Im}(\Pi^{\sharp}_{x})\subseteq T_x M$. Note that for any $F\in \mathcal{C}^{\infty}(M,\mathbb{R})$, the associated Hamiltonian vector field, can be written as $X_F = \Pi^{\sharp}(dF)$.

A special class of Poisson manifolds are the \textit{symplectic manifolds}. Recall that a symplectic manifold is a pair $(M,\omega)$, consisting of a smooth manifold $M$ equipped with a nondegenerate closed differential two-form $\omega$, called the \textit{symplectic form}. Since the nondegeneracy of $\omega$ means that the homomorphism $$ X\in \Gamma(TM) \mapsto \omega^{\flat}(X):=i_{X}\omega \in \Gamma(T^{\star}M)$$ is an isomorphism, we get that for each smooth function $F\in \mathcal{C}^{\infty}(M,\mathbb{R})$, one can associate an unique vector field, denoted by $X_F$, and called its \textit{Hamiltonian vector field}, as follows: $i_{X_F}\omega = dF$. 

If one define for every $F,G\in \mathcal{C}^{\infty}(M,\mathbb{R})$, $\{F,G\}_{\omega}:=\omega (X_F,X_G),$
then $\{\cdot,\cdot\}_{\omega}$ is a Poisson bracket on the symplectic manifold $(M,\omega)$, and hence $(M,\{\cdot,\cdot\}_{\omega})$ becomes a Poisson manifold.

Let us now come back to the general case of Poisson manifolds. In terms of Schouten-Nijenhuis bracket, an arbitrary bivector field $\Pi\in\mathfrak{X}^{2}(M)$, it is a Poisson bivector field, if and only if $[\Pi,\Pi]=0$. Moreover, if $(M,\Pi)$ is a Poisson manifold, then the Hamiltonian vector field generated by a smooth function $F\in \mathcal{C}^{\infty}(M,\mathbb{R})$, can be expressed in terms of Schouten-Nijenhuis bracket as, $X_F =[\Pi,F]$.

Denote by $D_{\Pi}:L(M)\rightarrow L(M)$, the linear operator given by $$S\in\mathfrak{X}^{i}(M)\mapsto [\Pi,S]\in\mathfrak{X}^{i+1}(M)$$ for each $i\in\{0,\dots,n\}$. If $\Pi$ is a Poisson structure, i.e., $[\Pi,\Pi]=0$, the operator $D_{\Pi}$ satisfies $D_{\Pi}^{2} =0$, and hence becomes a coboundary operator. The associated cohomology it is called \textit{Lichnerowicz-Poisson cohomology}, and is denoted by $H^{\star}_{\Pi}(M)$. Recall that the $k-$th Lichnerowicz-Poisson cohomology space of the Poisson manifold $(M,\Pi)$ is given by
$$
H^{k}_{\Pi}(M)={\ker(D_{\Pi}:\mathfrak{X}^{k}(M)\rightarrow \mathfrak{X}^{k+1}(M))}/{\operatorname{Im}(D_{\Pi}:\mathfrak{X}^{k-1}(M)\rightarrow \mathfrak{X}^{k}(M))}.
$$

Recall also that for a given Poisson manifold, $(M,\Pi)$, if $\Pi$ is of full rank (and hence the isomorphism $-(\Pi^{\sharp})^{-1}$ generates a symplectic structure), then $H^{\star}_{\Pi}(M)\cong H^{\star}_{dR}(M)$, where $H^{\star}_{dR}(M)$ stands for the classical de Rham cohomology.

Given a Poisson structure, $\Pi\in \mathfrak{X}^{2}(M)$, the space of its infinitesimal automorphisms, namely the space of vector fields $X$, such that $\mathcal{L}_{X} \Pi =0$, (or equivalently $D_{\Pi}(X)=0$), it is denoted by $Z^{1}_{\Pi}(M)\subseteq\mathfrak{X}^{1}(M)$, and its elements are called \textit{Poisson vector fields} of the Poisson manifold $(M,\Pi)$.

If one denotes by $B^{1}_{\Pi}(M)\subseteq\mathfrak{X}^{1}(M)$ the space of \textit{Hamiltonian vector fields} of the Poisson manifold $(M,\Pi)$, i.e., the space of vector fields $X\in\mathfrak{X}^{1}(M)$ for which there exists $F\in \mathfrak{X}^{0}(M)$ such that $X=D_{\Pi}(F)$, then the first Lichnerowicz-Poisson cohomology space can be written as
$$
H^{1}_{\Pi}(M)\cong Z^{1}_{\Pi}(M)/B^{1}_{\Pi}(M).
$$
Similarly, $H^{0}_{\Pi}(M)$ is the center of the Poisson algebra $\mathcal{C}^{\infty}(M,\mathbb{R})=\mathfrak{X}^{0}(M)$, and is called the \textit{space of Casimir functions} of the Poisson manifold $(M,\Pi)$.

Recall that any Poisson manifold $(M,\Pi)$ admits an integrable generalized distribution given by the assignment 
$$
x\in M \mapsto \operatorname{Im}(\Pi^{\sharp}_{x})=\{v_x \in T_x M \mid (\exists) X\in B^{1}_{\Pi}(M), s.t. \ X(x)=v_x\}\subseteq T_x M,
$$
whose leaves are simplectic manifolds (of possible different dimensions), and are called the \textit{symplectic leaves} of the Poisson manifold $(M,\Pi)$. Recall that the dimension of a symplectic leaf $\Sigma_x \subset M$ is equal to $\dim_{\mathbb{R}}\operatorname{Im}(\Pi^{\sharp}_{x})=\operatorname{rank}(\Pi(x))$.

\section{Two infinitesimal symmetries and the associated Poisson geometry}

In this section we study in detail the geometric picture behind the Theorem \eqref{MT}, for the case of two linearly independent infinitesimal symmetries, $X_1,X_2\in\mathfrak{X}(U)$, of the dynamical system $\dot x= X(x)$, $X\in\mathfrak{X}(U)$, where $U$ is an open subset of a smooth manifold $M$. 

Let us now state the first result of this section, which shows that any pair of linearly independent smooth vector fields, $X_1,X_2\in\mathfrak{X}(U)$, for which there exist $F_1,F_2 \in \mathcal{C}^{\infty}(U,\mathbb{R})$ such that $[X_1,X_2]=F_1 X_1 +F_2 X_2$, generates a Poisson structure on $U$.

\begin{proposition}\label{Pstr}
Let $M$ be a smooth manifold and $U\subseteq M$ an open subset. If $X_1,X_2\in\mathfrak{X}(U)$ is a pair of linearly independent smooth vector fields for which there exist $F_1,F_2 \in \mathcal{C}^{\infty}(U,\mathbb{R})$ such that $[X_1,X_2]=F_1 X_1 +F_2 X_2$, then the bivector field $\Pi_{X_1,X_2}:=X_1\wedge X_2\in\mathfrak{X}^{2}(U)$ is a Poisson structure on $U$.
\end{proposition}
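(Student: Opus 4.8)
The plan is to invoke the characterization recalled in the preceding section: a bivector field $\Pi\in\mathfrak{X}^{2}(U)$ is a Poisson structure precisely when its Schouten-Nijenhuis self-bracket vanishes, i.e. $[\Pi,\Pi]=0$. Thus the whole proof reduces to computing $[X_1\wedge X_2,\,X_1\wedge X_2]$ and showing it is zero, using only the bracket axioms (1)--(6) and the commutation hypothesis $[X_1,X_2]=F_1 X_1+F_2 X_2$.

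First I would expand $[X_1\wedge X_2,\,X_1\wedge X_2]$ by applying the graded Leibniz rule (property 4) together with the graded skew-symmetry (property 5). Since $X_1\wedge X_2$ has degree two, the operator $[X_1\wedge X_2,\,\cdot\,]$ is a derivation of degree one, and a short computation---tracking the sign factors $(-1)^{(t-1)s}$ and $(-1)^{(t-1)(s-1)+1}$ with $t=2$ for $X_1\wedge X_2$ and $s=1$ for a single vector field---collapses the inner brackets using $[X_1,X_1]=[X_2,X_2]=0$. I expect the result to be a single term proportional to the trivector $X_1\wedge[X_1,X_2]\wedge X_2$; concretely I anticipate $[X_1\wedge X_2,\,X_1\wedge X_2]=-2\,X_1\wedge[X_1,X_2]\wedge X_2$.

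The final step is to substitute the hypothesis $[X_1,X_2]=F_1 X_1+F_2 X_2$ into this trivector. By $\mathcal{C}^{\infty}$-linearity of the wedge product one gets $X_1\wedge[X_1,X_2]\wedge X_2 = F_1\,X_1\wedge X_1\wedge X_2 + F_2\,X_1\wedge X_2\wedge X_2$, and both summands vanish because $X_1\wedge X_1=X_2\wedge X_2=0$. Hence $[\Pi_{X_1,X_2},\Pi_{X_1,X_2}]=0$, so $\Pi_{X_1,X_2}$ is a Poisson structure. I would also note that the linear independence of $X_1,X_2$ is not actually needed for the Poisson property itself; it only guarantees that $\Pi_{X_1,X_2}=X_1\wedge X_2$ is nowhere vanishing, so that the resulting structure genuinely has rank two, as claimed in the introduction.

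The only real obstacle is the bookkeeping of signs in the Schouten-Nijenhuis expansion: the graded skew-symmetry and Leibniz rules must be applied in the correct order and with the correct degrees, and it is easy to drop a sign when commuting the one-vector factors past one another. Once the expansion is carried out carefully, the substitution of the commutation relation and the vanishing of the repeated wedge factors make the conclusion immediate.
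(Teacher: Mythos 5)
Your proposal is correct and follows essentially the same route as the paper: reduce to the vanishing of the Schouten--Nijenhuis self-bracket, compute $[X_1\wedge X_2,X_1\wedge X_2]$ as a multiple of $X_1\wedge X_2\wedge[X_1,X_2]$ (your $-2\,X_1\wedge[X_1,X_2]\wedge X_2$ equals the paper's $2\,X_1\wedge X_2\wedge[X_1,X_2]$ after reordering the wedge factors), and then kill it by substituting $[X_1,X_2]=F_1X_1+F_2X_2$. Your side remark that linear independence is not needed for $[\Pi,\Pi]=0$ itself, only for the rank-two claim, is also accurate.
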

\begin{proof}
Using the characterization of Poisson structures in terms of Schouten-Nijenhuis bracket, the bivector field $\Pi_{X_1,X_2}\in\mathfrak{X}^{2}(U)$ is a Poisson structure if and only if $$[\Pi_{X_1,X_2},\Pi_{X_1,X_2}]=0.$$

A simple computation shows that $$[\Pi_{X_1,X_2},\Pi_{X_1,X_2}]=2X_1\wedge X_2 \wedge [X_1,X_2],$$ and hence the conclusion follows from the existence of the smooth functions $F_1,F_2 \in \mathcal{C}^{\infty}(U,\mathbb{R})$ such that $[X_1,X_2]=F_1 X_1 +F_2 X_2$.
\end{proof}

Let us now analyze the geometry of $U$, as a Poisson manifold together with the Poisson structure $\Pi_{X_1,X_2}=X_1\wedge X_2$. Next result gives an explicit formula for the Poisson bracket introduced by $\Pi_{X_1,X_2}$.

\begin{proposition}\label{PBRT}
The Poisson bracket introduced by the Poisson structure $\Pi_{X_1,X_2}$ is given by
\begin{equation}\label{EPB}
\{F,G\}_{\Pi_{X_1,X_2}}=(\mathcal{L}_{X_1}F)(\mathcal{L}_{X_2}G) - (\mathcal{L}_{X_2}F)(\mathcal{L}_{X_1}G),
\end{equation}
for any $F,G\in \mathcal{C}^{\infty}(U,\mathbb{R})$.
\end{proposition}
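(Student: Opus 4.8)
The plan is to evaluate the defining formula \eqref{PB}, namely $\{F,G\}_{\Pi_{X_1,X_2}}=\langle dF\wedge dG,\Pi_{X_1,X_2}\rangle$, directly on the decomposable bivector $\Pi_{X_1,X_2}=X_1\wedge X_2$. The only ingredient needed is the relation between the pairing of a two-form with a bivector and the associated musical homomorphism $\Pi^{\sharp}$ recalled in the previous section, namely $\langle \alpha\wedge\beta,\Pi\rangle=\langle\beta,\Pi^{\sharp}(\alpha)\rangle$ for one-forms $\alpha,\beta$, together with the identity $\langle dH,X_i\rangle=dH(X_i)=\mathcal{L}_{X_i}H$.

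First I would compute $\Pi_{X_1,X_2}^{\sharp}(dF)$ for an arbitrary $F\in\mathcal{C}^{\infty}(U,\mathbb{R})$. Evaluating the pairing on the decomposable two-form $dF\wedge\beta$ and the decomposable bivector $X_1\wedge X_2$ gives $\langle dF\wedge\beta,X_1\wedge X_2\rangle=\langle dF,X_1\rangle\langle\beta,X_2\rangle-\langle dF,X_2\rangle\langle\beta,X_1\rangle$; since this must equal $\langle\beta,\Pi_{X_1,X_2}^{\sharp}(dF)\rangle$ for every one-form $\beta$, one reads off
$$\Pi_{X_1,X_2}^{\sharp}(dF)=\langle dF,X_1\rangle X_2-\langle dF,X_2\rangle X_1=(\mathcal{L}_{X_1}F)X_2-(\mathcal{L}_{X_2}F)X_1,$$
which is precisely the Hamiltonian vector field $X_F$ of the candidate structure.

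Substituting into \eqref{PB}, rewritten as $\{F,G\}_{\Pi_{X_1,X_2}}=\langle dG,\Pi_{X_1,X_2}^{\sharp}(dF)\rangle$, and using linearity of the pairing together with $\langle dG,X_i\rangle=\mathcal{L}_{X_i}G$, I would obtain
$$\{F,G\}_{\Pi_{X_1,X_2}}=(\mathcal{L}_{X_1}F)(\mathcal{L}_{X_2}G)-(\mathcal{L}_{X_2}F)(\mathcal{L}_{X_1}G),$$
which is the asserted formula \eqref{EPB}.

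The computation is entirely routine; the one point requiring care is the sign convention in the pairing between $dF\wedge dG$ and $X_1\wedge X_2$, i.e. fixing the antisymmetrization (determinant) normalization so that the resulting bracket is skew-symmetric and consistent with the $\Pi^{\sharp}$ convention adopted in the previous section. Once that convention is pinned down, the skew-symmetry of \eqref{EPB} in $F$ and $G$ is manifest, and no further work is needed.
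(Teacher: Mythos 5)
Your proposal is correct and follows essentially the same route as the paper: both evaluate $\langle dF\wedge dG, X_1\wedge X_2\rangle$ from formula \eqref{PB} as the $2\times 2$ determinant of the pairings $\langle dF,X_i\rangle$, $\langle dG,X_j\rangle$ and then identify $\langle dH,X_i\rangle=\mathcal{L}_{X_i}H$. The only cosmetic difference is that you organize the determinant expansion by first computing $\Pi_{X_1,X_2}^{\sharp}(dF)$ and then pairing with $dG$, whereas the paper expands the pairing directly.
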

\begin{proof}
Using the formula \eqref{PB} in the case of the Poisson structure $\Pi_{X_1,X_2}=X_1\wedge X_2$, we obtain directly 
\begin{align*}
\{F,G\}_{\Pi_{X_1,X_2}}&=\langle dF \wedge dG, \Pi_{X_1,X_2} \rangle = \langle dF \wedge dG, X_1\wedge X_2 \rangle\\
&=\left| 
\begin{array}{cc}  
\langle dF,X_1\rangle & \langle dF,X_2\rangle \\
\langle dG,X_1\rangle & \langle dG,X_2\rangle \\
\end{array} \right|
=\left| 
\begin{array}{cc}  
 dF \cdot X_1 & dF \cdot X_2 \\
dG \cdot X_1  & dG \cdot X_2 \\
\end{array} \right| \\
&=(\mathcal{L}_{X_1}F)(\mathcal{L}_{X_2}G) - (\mathcal{L}_{X_2}F)(\mathcal{L}_{X_1}G).
\end{align*}
\end{proof}

Let us now provide an explicit formula the Hamiltonian vector fields with respect to the Poisson structure $\Pi_{X_1,X_2}$.

\begin{proposition}\label{HVF}
Let $F\in \mathcal{C}^{\infty}(U,\mathbb{R})$ be a smooth function on the Poisson manifold $(U,\Pi_{X_1,X_2})$. Then the Hamiltonian vector field associated with $F$ is given by
\begin{equation*}
X_{F}=(\mathcal{L}_{X_2}F)X_1 - (\mathcal{L}_{X_1}F)X_2.
\end{equation*}
\end{proposition}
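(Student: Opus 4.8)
The plan is to identify $X_F$ through its defining property as a Hamiltonian vector field, namely that it is the unique vector field satisfying $\mathcal{L}_{X_F}G = \{F,G\}_{\Pi_{X_1,X_2}}$ for every $G \in \mathcal{C}^{\infty}(U,\mathbb{R})$, and then to check that the proposed field does the job. Concretely, I would set $Y := (\mathcal{L}_{X_2}F)X_1 - (\mathcal{L}_{X_1}F)X_2$ and compute, for an arbitrary $G$, the derivative $\mathcal{L}_Y G = \langle dG, Y\rangle$ by expanding the pairing linearly over the two summands, obtaining an expression in $\mathcal{L}_{X_1}F$, $\mathcal{L}_{X_2}F$, $\mathcal{L}_{X_1}G$, $\mathcal{L}_{X_2}G$. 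Comparing this with the explicit bracket formula \eqref{EPB} furnished by Proposition \ref{PBRT} should exhibit $\mathcal{L}_Y G$ and $\{F,G\}_{\Pi_{X_1,X_2}}$ as the same skew-symmetric bilinear expression; here is precisely where one must be scrupulous about the sign convention tying the formula \eqref{EPB} to the normalization $\{F,G\} = \mathcal{L}_{X_F}G$ of the Hamiltonian field, so that the two terms are matched with the correct orientation.

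Once the identity $\mathcal{L}_Y G = \{F,G\}_{\Pi_{X_1,X_2}}$ is established for all $G$, I would invoke uniqueness of the Hamiltonian vector field to conclude $Y = X_F$. This uniqueness is elementary: if a vector field $Z$ satisfies $\langle dG, Z\rangle = 0$ for every smooth $G$, then, taking $G$ to be local coordinate functions, $Z$ vanishes at every point, hence $Z \equiv 0$; applied to $Z = Y - X_F$ this gives the claim. Alternatively, one may read off the coefficients directly: since the identity must hold for all $G$ and the assignment $G \mapsto (\mathcal{L}_{X_1}G, \mathcal{L}_{X_2}G)$ realizes arbitrary pairs locally, the coefficients of $X_1$ and $X_2$ are pinned down uniquely, and this is the point at which the linear independence of $X_1, X_2$ guaranteed by the hypotheses enters.

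A cleaner and sign-safe alternative, which I would use as a cross-check, is to compute $X_F$ from the Schouten--Nijenhuis description $X_F = [\Pi_{X_1,X_2}, F]$ recalled earlier. Writing $\Pi_{X_1,X_2} = X_1 \wedge X_2$ and applying the graded antisymmetry (property $5$) to reduce to $[F, X_1 \wedge X_2]$, followed by the graded Leibniz rule (property $4$) to distribute across the wedge together with the identities $[F, X_i] = -\mathcal{L}_{X_i}F$, one arrives directly at $(\mathcal{L}_{X_2}F)X_1 - (\mathcal{L}_{X_1}F)X_2$. I expect the only genuine obstacle in either route to be the bookkeeping of signs: in the bracket route, ensuring consistency between \eqref{EPB} and the convention $\{F,G\} = \mathcal{L}_{X_F}G$; in the Schouten route, tracking the $(-1)$ factors coming from the degrees in properties $4$ and $5$. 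No analytic difficulty arises, since everything reduces to the bilinear and derivation identities already available.
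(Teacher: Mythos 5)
Your ``cross-check'' route is exactly the paper's proof: write $X_F=[\Pi_{X_1,X_2},F]$, use the graded antisymmetry (property 5, which for degrees $(2,0)$ gives $[X_1\wedge X_2,F]=[F,X_1\wedge X_2]$), then the graded Leibniz rule (property 4) to obtain $[F,X_1]X_2-[F,X_2]X_1$, and finally $[F,X_i]=-[X_i,F]=-\mathcal{L}_{X_i}F$ to land on $(\mathcal{L}_{X_2}F)X_1-(\mathcal{L}_{X_1}F)X_2$. With the Schouten axioms as listed in Section 3 this produces the stated sign, and your uniqueness argument (test against coordinate functions, or invoke the linear independence of $X_1,X_2$) is sound. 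So you should promote that computation from cross-check to primary proof.

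The caveat is that your primary route, carried out faithfully, does \emph{not} terminate at the stated formula, and the obstacle you flagged is real rather than mere bookkeeping. With the paper's normalization $\{F,G\}=\mathcal{L}_{X_F}G=\langle dG,X_F\rangle$ and the bracket formula \eqref{EPB}, the requirement $\langle dG,Y\rangle=\{F,G\}_{\Pi_{X_1,X_2}}$ for all $G$ reads
\begin{equation*}
\langle dG,Y\rangle=(\mathcal{L}_{X_1}F)\langle dG,X_2\rangle-(\mathcal{L}_{X_2}F)\langle dG,X_1\rangle,
\end{equation*}
which by your own uniqueness step forces $Y=(\mathcal{L}_{X_1}F)X_2-(\mathcal{L}_{X_2}F)X_1$, the negative of the proposition. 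In other words, the two descriptions of the Hamiltonian vector field recalled in Section 3, $X_F=\Pi^{\sharp}(dF)$ and $X_F=[\Pi,F]$, disagree by a sign for $\Pi=X_1\wedge X_2$ under the conventions as printed; this is an internal clash of conventions in the paper rather than an error in your reasoning, but it means the pairing route cannot be reconciled with the statement without flipping one of those conventions. The sign in the proposition is the one delivered by the Schouten computation, and it is the one confirmed by the worked example, where $H(x,y)=x/y$ yields $X_H=x\partial_x+y\partial_y=X$.
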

\begin{proof}
Using the characterization of Hamiltonian vector fields via Schouten-Nijenhuis bracket, we obtain
\begin{align*}
X_{F}&=[\Pi_{X_1,X_2},F]=[X_1\wedge X_2 ,F]=[F,X_1\wedge X_2]\\
&=[F,X_1]X_2 - [F,X_2]X_1 = [X_2, F]X_1 - [X_1,F]X_2\\
&=(\mathcal{L}_{X_2}F)X_1 - (\mathcal{L}_{X_1}F)X_2.
\end{align*}
\end{proof}

A consequence of the above result is the characterization of the space of Casimir functions of the Poisson manifold $(U,\Pi_{X_1,X_2})$. Before stating the result, let us recall that a smooth function $C\in \mathcal{C}^{\infty}(U,\mathbb{R})$, is a Casimir function of the Poisson manifold $(U,\Pi_{X_1,X_2})$ if and only if it belongs to the center of the Poisson algebra $\mathcal{C}^{\infty}(U,\mathbb{R})$, i.e., $\{C,F\}_{\Pi_{X_1,X_2}}=0$, for every $F\in \mathcal{C}^{\infty}(U,\mathbb{R})$, or equivalently, $X_C=0$.

\begin{proposition}\label{TCPX}
The space of Casimir functions of the Poisson manifold $(U,\Pi_{X_1,X_2})$ is $\operatorname{Cas}(\Pi_{X_1,X_2})=\ker\mathcal{L}_{X_1} \bigcap \ker\mathcal{L}_{X_2}$, or equivalently, in terms of Lichnerowicz-Poisson cohomology, $H^{0}_{\Pi_{X_1,X_2}}(U)=\ker\mathcal{L}_{X_1} \bigcap \ker\mathcal{L}_{X_2}$.
\end{proposition}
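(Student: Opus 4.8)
The plan is to read off the result directly from the explicit description of Hamiltonian vector fields obtained in Proposition \ref{HVF}, combined with the defining property of Casimir functions recalled just above the statement. Recall that a smooth function $C\in\mathcal{C}^{\infty}(U,\mathbb{R})$ is a Casimir function of $(U,\Pi_{X_1,X_2})$ precisely when its Hamiltonian vector field vanishes, i.e. $X_C=0$. So the whole proof reduces to identifying, via the known formula for $X_C$, exactly which functions $C$ have vanishing Hamiltonian vector field.

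First I would substitute $F=C$ into the formula of Proposition \ref{HVF}, which gives
$$
X_C=(\mathcal{L}_{X_2}C)X_1-(\mathcal{L}_{X_1}C)X_2 .
$$
Hence $C$ is a Casimir function if and only if $(\mathcal{L}_{X_2}C)X_1-(\mathcal{L}_{X_1}C)X_2=0$ as a vector field on $U$. The key step is then to pass from the vanishing of this particular combination to the vanishing of its two coefficient functions. Since $X_1,X_2$ are linearly independent on $U$ --- the same hypothesis that was used to split a vanishing combination in the proof of Theorem \ref{MT} --- evaluating at each point $x\in U$ forces $(\mathcal{L}_{X_2}C)(x)=0$ and $(\mathcal{L}_{X_1}C)(x)=0$. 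Thus $X_C=0$ is equivalent to $\mathcal{L}_{X_1}C=0$ and $\mathcal{L}_{X_2}C=0$, that is, to $C\in\ker\mathcal{L}_{X_1}\cap\ker\mathcal{L}_{X_2}$, which establishes the equality $\operatorname{Cas}(\Pi_{X_1,X_2})=\ker\mathcal{L}_{X_1}\cap\ker\mathcal{L}_{X_2}$.

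Finally, the cohomological reformulation requires no extra work: as recalled in the preliminaries on Poisson geometry, the zeroth Lichnerowicz-Poisson cohomology space $H^{0}_{\Pi_{X_1,X_2}}(U)$ is by definition the center of the Poisson algebra $\mathcal{C}^{\infty}(U,\mathbb{R})$, i.e. the space of Casimir functions. Therefore the identity $H^{0}_{\Pi_{X_1,X_2}}(U)=\ker\mathcal{L}_{X_1}\cap\ker\mathcal{L}_{X_2}$ follows immediately from the first part.

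I do not anticipate any genuine obstacle here, since the statement is essentially a corollary of Proposition \ref{HVF}. The only point that must be handled with care is that ``linearly independent'' has to be understood in the pointwise sense used throughout the paper (so that $\langle X_1,X_2\rangle$ is a genuine rank-two distribution); it is exactly this pointwise independence that legitimizes the coefficient-splitting argument above and that also underlies the claim in Proposition \ref{Pstr} that $\Pi_{X_1,X_2}$ has constant rank two.
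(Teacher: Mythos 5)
Your proof is correct and follows essentially the same route as the paper: both reduce the Casimir condition to $X_C=0$, apply Proposition \ref{HVF} to get $(\mathcal{L}_{X_2}C)X_1-(\mathcal{L}_{X_1}C)X_2=0$, and invoke the pointwise linear independence of $X_1,X_2$ to split off the coefficients. The cohomological restatement is, as you say, purely definitional.
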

\begin{proof}
Recall that $C\in  \mathcal{C}^{\infty}(U,\mathbb{R})$ is a Casimir function of the Poisson manifold $(U,\Pi_{X_1,X_2})$ if and only if $X_C=0$. By using the Proposition \eqref{HVF}, the equation $X_C=0$ is equivalent to
\begin{equation*}
(\mathcal{L}_{X_2}C)X_1 - (\mathcal{L}_{X_1}C)X_2 =0.
\end{equation*}
Since by definition, the vector fields $X_1$ and $X_2$ are linearly independent, we obtain that $$\mathcal{L}_{X_1}C=\mathcal{L}_{X_2}C =0,$$ and hence the conclusion.
\end{proof}

Next result gives a characterization of the symplectic leaves of the Poisson manifold $(U,\Pi_{X_1,X_2})$.
\begin{proposition}\label{SYML}
The Poisson structure $\Pi_{X_1,X_2}$ has constant rank two on $U$. Moreover, if $\Sigma_{x_0}\subseteq U$ is the symplectic leaf through the point $x_{0}\in U$, then for each $x\in \Sigma_{x_0}$, $$T_x \Sigma_{x_0}=\operatorname{span}_{\mathbb{R}}\{X_1(x),X_2(x)\}.$$
\end{proposition}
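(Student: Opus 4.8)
The plan is to determine the rank of $\Pi_{X_1,X_2}$ pointwise by identifying $\operatorname{Im}(\Pi^{\sharp}_{x})$ with the span of the Hamiltonian vector fields evaluated at $x$, and then to read off the symplectic leaves from the general fact, recalled at the end of Section~3, that $T_x\Sigma_{x_0}=\operatorname{Im}(\Pi^{\sharp}_{x})$ at each point $x$ of a leaf $\Sigma_{x_0}$.

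First I would use Proposition~\ref{HVF} to obtain one inclusion for free. For every $F\in\mathcal{C}^{\infty}(U,\mathbb{R})$ the Hamiltonian vector field $X_F=(\mathcal{L}_{X_2}F)X_1-(\mathcal{L}_{X_1}F)X_2$ takes values pointwise in $\operatorname{span}_{\mathbb{R}}\{X_1(x),X_2(x)\}$. Since $\operatorname{Im}(\Pi^{\sharp}_{x})$ is exactly the set of values at $x$ of Hamiltonian vector fields, this yields $\operatorname{Im}(\Pi^{\sharp}_{x})\subseteq\operatorname{span}_{\mathbb{R}}\{X_1(x),X_2(x)\}$, and in particular $\operatorname{rank}(\Pi_{X_1,X_2}(x))\le 2$ for every $x\in U$.

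For the reverse inclusion I would fix $x_0\in U$ and produce two smooth functions whose Hamiltonian vector fields realize $X_1(x_0)$ and $X_2(x_0)$. Because $X_1(x_0)$ and $X_2(x_0)$ are linearly independent in $T_{x_0}U$ --- this being the convention for ``linearly independent'' fixed before Remark~\ref{RMT}, which is what makes $\langle X_1,X_2\rangle$ a rank-two distribution --- one can choose covectors $\alpha,\beta\in T^{\star}_{x_0}U$ with $\alpha(X_1(x_0))=0$, $\alpha(X_2(x_0))=1$ and $\beta(X_1(x_0))=-1$, $\beta(X_2(x_0))=0$, and then pick $F,G\in\mathcal{C}^{\infty}(U,\mathbb{R})$ with $dF(x_0)=\alpha$ and $dG(x_0)=\beta$. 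Substituting into the formula of Proposition~\ref{HVF} and using $(\mathcal{L}_{X_i}F)(x_0)=\alpha(X_i(x_0))$ gives $X_F(x_0)=X_1(x_0)$ and $X_G(x_0)=X_2(x_0)$, so $\operatorname{span}_{\mathbb{R}}\{X_1(x_0),X_2(x_0)\}\subseteq\operatorname{Im}(\Pi^{\sharp}_{x_0})$. Together with the previous inclusion and the linear independence of $X_1(x_0),X_2(x_0)$ this forces $\operatorname{rank}(\Pi_{X_1,X_2}(x_0))=2$, and since $x_0$ was arbitrary the rank is constantly two on $U$.

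The \emph{moreover} part then requires no additional computation: the general theory recalled in Section~3 gives $T_x\Sigma_{x_0}=\operatorname{Im}(\Pi^{\sharp}_{x})$ at every point $x$ of the symplectic leaf $\Sigma_{x_0}$, and the two inclusions above show that this image equals $\operatorname{span}_{\mathbb{R}}\{X_1(x),X_2(x)\}$. I expect the only point needing a little care to be the construction of $F$ and $G$ with prescribed differentials at $x_0$; but this is a purely local matter that follows immediately from the linear independence of $X_1(x_0),X_2(x_0)$ together with the fact that any covector at a point is the differential of some smooth function (for instance an affine coordinate function in a chart), so no genuine obstacle arises.
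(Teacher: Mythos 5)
Your argument is correct, and it differs from the paper's in how the rank is computed. The paper proceeds in coordinates: it invokes the flow-box theorem to straighten $X_1$ to $\partial_{x_1}$ on a chart $V$, computes the components $\{x_i,x_j\}_{\Pi_{X_1,X_2}}=\delta_{1i}X_2^j-X_2^i\delta_{1j}$ of the bivector, and reads off rank two from the pointwise linear independence of $X_1$ and $X_2$; for the leaf statement it then proves only the inclusion $T_x\Sigma_{x_0}\subseteq\operatorname{span}_{\mathbb{R}}\{X_1(x),X_2(x)\}$ via Proposition~\ref{HVF} and concludes by comparing dimensions. You instead stay coordinate-free: the upper bound on the rank comes from the same use of Proposition~\ref{HVF}, and the lower bound from exhibiting functions $F,G$ whose Hamiltonian vector fields realize $X_1(x_0)$ and $X_2(x_0)$, built from a dual-basis choice of covectors and the elementary fact that every covector at a point is the differential of some smooth function. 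This gives the equality $\operatorname{Im}(\Pi^{\sharp}_{x})=\operatorname{span}_{\mathbb{R}}\{X_1(x),X_2(x)\}$ by double inclusion, after which both the rank claim and the leaf claim are immediate from the identification $T_x\Sigma_{x_0}=\operatorname{Im}(\Pi^{\sharp}_{x})$ recalled in Section~3. What the paper's route buys is the explicit local normal form of the bivector on the chart, which can be useful in its own right; what your route buys is the avoidance of the straightening theorem and of any coordinate computation, plus a slightly sharper conclusion (you identify the characteristic distribution itself, not merely its dimension, before invoking the leaf structure). Both uses of Proposition~\ref{HVF} and the appeal to pointwise linear independence are the same; only the source of the lower bound on the rank genuinely differs.
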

\begin{proof}
The proof is adapted from \cite{MRT}, where the case of a Poisson structure generated by two commuting vector fields $X_1, X_2$ is presented. Let $x_0 \in U$ be fixed. Since $X_1$ and $X_2$ are linearly independent, they do not have equilibrium points inside $U$. Hence, one can find a local chart around $x_0$ (whose geometric zone will be denoted by $V\subseteq  U$), such that $X_1=\partial_{x_1}$, where $(x_1,\dots,x_n)$ stands for the local coordinates on $V$. Let $X_2= X_{2}^{1}\partial_{x_1}+\dots+ X_{2}^{n}\partial_{x_n}$, be the local expression of the vector field $X_2$, where $X_{2}^{1},\dots,X_{2}^{n}\in\mathcal{C}^{\infty}(V,\mathbb{R})$. 

Hence, the local expression of the Poisson structure $\Pi_{X_1,X_2}$ on $V$, it is given for each $x\in V$ by
$$\Pi_{X_1,X_2}(x)=\sum_{1\leq i<j \leq n}\{x_i,x_j\}_{\Pi_{X_1,X_2}}(x)\partial_{x_i}\wedge \partial_{x_j}.$$ 

In order to compute the rank of $\Pi_{X_1,X_2}$, recall from \eqref{EPB} that
\begin{align*}
\{x_i,x_j\}_{\Pi_{X_1,X_2}}&= (\mathcal{L}_{X_1}x_i)(\mathcal{L}_{X_2}x_j) - (\mathcal{L}_{X_2}x_i)(\mathcal{L}_{X_1}x_j)\\
&=\delta_{1i}X_{2}^{j}-X_{2}^{i}\delta_{1j},
\end{align*}
where $\delta$ stands for Kronecker's delta, namely, $\delta_{ij}=0$, if $i\neq j$, and respectively $\delta_{ij}=1$, if $i=j$.

Since $X_1$ and $X_2$ are linearly independent on $V$ too, we obtain that $$\operatorname{rank} \Pi_{X_1,X_2}(x)=2$$ for any $x\in V$. As $x_0\in U$ was arbitrary chosen, one conclude that $\operatorname{rank} \Pi_{X_1,X_2}(x_0)=2$ for every $x_0 \in U$. Hence, each symplectic leaf of the Poisson manifold $(U,\Pi_{X_1,X_2})$ is two-dimensional. 

Let us fix a symplectic leaf $\Sigma_{x_0}$. We shall prove now that for each $x\in \Sigma_{x_0}$, $$T_x \Sigma_{x_0}=\operatorname{span}_{\mathbb{R}}\{X_1(x),X_2(x)\}.$$ 

If one shows that $$T_x \Sigma_{x_0} \subseteq \operatorname{span}_{\mathbb{R}}\{X_1(x),X_2(x)\}, $$ then the conclusion follows from the fact that both vector spaces are two dimensional. In order to prove the inclusion, let $v_x \in T_x \Sigma_{x_0}$. Then, since $\Sigma_{x_0}$ is a symplectic leaf, there exists a Hamiltonian vector field $X_{H}$, such that $v_x=X_H (x)$. By Proposition \eqref{HVF}, we get that $$v_x = (\mathcal{L}_{X_2}H)(x)X_1 (x) - (\mathcal{L}_{X_1}H)(x)X_2 (x),$$ and hence $$v_x \in \operatorname{span}_{\mathbb{R}}\{X_1(x),X_2(x)\}.$$
\end{proof}

Let us now prove the main result of this section, which describes the geometry of the class of dynamical systems that admits two linearly independent infinitesimal symmetries. 
\begin{theorem}\label{PVF}
Let $\dot x= X(x)$ be a dynamical system, where $X\in\mathfrak{X}(U)$ is a smooth vector field defined eventually on an open subset $U\subseteq M$ of a smooth manifold $M$, such that the following conditions hold true:
\begin{enumerate}
\item there exist two linearly independent infinitesimal symmetries of $X$, denoted by $X_1,X_2 \in \mathfrak{X}(U)$,
\item there exist two smooth functions $F_1,F_2 \in\mathcal{C}^{\infty}(U,\mathbb{R})$, such that $[X_1,X_2]=F_1 X_1 + F_2 X_2$.
\end{enumerate}
Then, apart from the conclusions of Theorem \eqref{MT}, we obtain that
\begin{enumerate}
\item the bivector field $\Pi_{X_1,X_2}:=X_1\wedge X_2$ is a Poisson structure on the open subset $U$,
\item the vector field $X$ becomes a Poisson vector field of the Poisson manifold $(U,\Pi_{X_1,X_2})$, or equivalently the Poisson structure $\Pi_{X_1,X_2}$ is constant along the flow of $X$, (and hence the flow of $X$ maps symplectic leaves to symplectic leaves),
\item if moreover there exists a smooth function $H\in\mathcal{C}^{\infty}(U,\mathbb{R}) $ such that 
$$
X=(\mathcal{L}_{X_2}H)X_1 - (\mathcal{L}_{X_1}H)X_2,
$$
then the vector field $X$ becomes a Hamiltonian vector field. In this case, together with $H$, every function from the set $\ker\mathcal{L}_{X_1} \bigcap \ker\mathcal{L}_{X_2},$ is a first integral of the dynamical system $\dot x= X(x)$.
\end{enumerate}
\end{theorem}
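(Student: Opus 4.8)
The plan is to read off all three conclusions from the propositions already proved in this section, since the theorem is essentially their synthesis together with the defining symmetry relations $[X,X_1]=[X,X_2]=0$. The first conclusion requires nothing new: its hypotheses are exactly those of Proposition \ref{Pstr}, so I would simply invoke that proposition to conclude that $\Pi_{X_1,X_2}=X_1\wedge X_2$ is a Poisson structure on $U$.

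For the second conclusion I would verify the definition of a Poisson vector field, namely $\mathcal{L}_X\Pi_{X_1,X_2}=0$, by a direct computation. Using that the Lie derivative acts as a derivation on the exterior product (equivalently, property (4) of the Schouten--Nijenhuis bracket with a degree-one first argument), I expand
\[
\mathcal{L}_X(X_1\wedge X_2)=(\mathcal{L}_X X_1)\wedge X_2 + X_1\wedge(\mathcal{L}_X X_2)=[X,X_1]\wedge X_2 + X_1\wedge[X,X_2],
\]
and both terms vanish because $X_1,X_2$ are infinitesimal symmetries. Once $\mathcal{L}_X\Pi_{X_1,X_2}=0$ is established, the parenthetical assertion follows from the standard fact that the flow of $X$ then consists of Poisson diffeomorphisms, which necessarily map symplectic leaves to symplectic leaves.

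For the third conclusion I would note that the displayed formula for $X$ coincides, by Proposition \ref{HVF}, with the Hamiltonian vector field $X_H$ of $H$; hence $X=X_H$ is Hamiltonian. To see that every $C\in\ker\mathcal{L}_{X_1}\cap\ker\mathcal{L}_{X_2}$ is a first integral, I would compute $\mathcal{L}_X C$ directly, using linearity of the Lie derivative in the vector-field slot:
\[
\mathcal{L}_X C=(\mathcal{L}_{X_2}H)\,\mathcal{L}_{X_1}C-(\mathcal{L}_{X_1}H)\,\mathcal{L}_{X_2}C=0,
\]
since $\mathcal{L}_{X_1}C=\mathcal{L}_{X_2}C=0$. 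Equivalently, by Proposition \ref{TCPX} such $C$ are precisely the Casimir functions, so that $\mathcal{L}_X C=\{H,C\}_{\Pi_{X_1,X_2}}=0$; the same argument applied with $C=H$, using antisymmetry of the bracket, shows that $H$ itself is a first integral.

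Since every step reduces to an already-established proposition or to the symmetry hypotheses, I do not expect a genuine obstacle. The only point that demands care is the sign bookkeeping when passing between the Lie derivative of the bivector and the coboundary operator $D_{\Pi_{X_1,X_2}}$, which I would sidestep entirely by relying on the concrete derivation computation above rather than on the abstract operator identity.
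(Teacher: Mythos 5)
Your proposal is correct and follows essentially the same route as the paper: conclusion (1) via Proposition \ref{Pstr}, conclusion (2) by expanding $\mathcal{L}_X(X_1\wedge X_2)$ as a derivation (which is literally the paper's Schouten--Nijenhuis computation $[X,X_1\wedge X_2]=[X,X_1]\wedge X_2+X_1\wedge[X,X_2]$), and conclusion (3) via Propositions \ref{HVF} and \ref{TCPX} together with $\{H,H\}=0$. No gaps.
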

\begin{proof}
The first statement follows directly from  Proposition \eqref{Pstr}, since the infinitesimal symmetries $X_1$ and $X_2$ are linearly independent.

\bigskip
The second statement is equivalent with the condition that the Lie derivative of the Poisson structure $\Pi_{X_1,X_2}$ is zero along the vector field $X$, i.e., $\mathcal{L}_{X}\Pi_{X_1,X_2}=0$, or equivalently in terms of Schouten-Nijenhuis bracket, $[\Pi_{X_1,X_2},X]=0$.

The relation $[\Pi_{X_1,X_2},X]=0$ follows by applying the properties of the Schouten-Nijenhuis bracket, namely
\begin{align*}
[X_1\wedge X_2,X]&=-[X,X_1\wedge X_2]= -[X,X_1]\wedge X_2 - X_1 \wedge[X,X_2]\\
&=0,
\end{align*}
since $[X,X_1]=[X,X_2]=0$, because $X_1,X_2$ are infinitesimal symmetries of $X$.

\bigskip
The third statement is a direct consequence of Proposition \eqref{TCPX}, and respectively Proposition \eqref{HVF}. In order to complete the proof, note that if $C$ is a Casimir function of the Poisson bracket $\Pi_{X_1,X_2}$, and $X$ is a Hamiltonian vector field associated with the smooth function $H\in \mathcal{C}^{\infty}(U,\mathbb{R})$, i.e. $X=X_H$, then $C$ is a first integral for $X$. Indeed, since by definition, Casimir functions Poisson commute with all functions from $\mathcal{C}^{\infty}(U,\mathbb{R})$, we obtain that
$$
\mathcal{L}_{X}C=\mathcal{L}_{X_H}C=\{H,C\}_{\Pi_{X_1,X_2}}=0,
$$
and hence $C$ is a first integral of $X$.

Obviously, $H$ is also a first integral of $X$ since $\mathcal{L}_{X}H=\mathcal{L}_{X_H}H=\{H,H\}_{\Pi_{X_1,X_2}}=0$.
\end{proof}

\begin{remark}
In the hypothesis of Theorem \eqref{PVF}, if $H^{1}_{\Pi_{X_1,X_2}}(U)=0$, then the vector field $X$, as it is a Poisson vector filed, it must be also a Hamiltonian vector field of the Poisson manifold $(U,\Pi_{X_1,X_2})$. Moreover, if the smooth manifold $M$ is two dimensional, then the Poisson structure $\Pi_{X_1,X_2}$ is of full rank on the open set $U$, and hence $H^{1}_{\Pi_{X_1,X_2}}(U)\cong H^{1}_{dR}(U)$.
\end{remark}

Let us now illustrate the main results in the case of particular dynamical system. Recall from \cite{tudoranjgp} that this dynamical system represents the (local) normal form of two-dimensional completely integrable systems (restricted to a certain open and dense set).
\begin{example}
Let $\dot{\mathbf{x}}=X(\mathbf{x})$, be the dynamical system generated by the vector field 
\begin{equation*}
X(\mathbf{x})=x\partial_x +y \partial_y, \  \mathbf{x}=(x,y)\in\mathbb{R}^2.
\end{equation*}

If one restrict the vector field $X$ to the open set $$\mathcal{O}=\mathbb{R}^{2} \setminus\left\{ \{(x,0)\mid x\in\mathbb{R}\} \cup \{(0,y)\mid y\in\mathbb{R}\} \right\},$$ then the dynamical system $\dot{\mathbf{x}}=X(\mathbf{x})$ admits two linearly independent infinitesimal symmetries, namely $X_1 (x,y)=y\partial_x + x\partial_y$, and respectively $X_2 (x,y)= (y^{2} / x) \partial_x$. 

Indeed, one can easily verify that $[X,X_1]=[X,X_2]=0$, and respectively, 
$$
\det 
\left[
\begin{array}{cc}
y & x \\
y^2 / x & 0
\end{array}
\right]=-y^2\neq 0,$$ for every $(x,y)\in \mathcal{O}$.

Moreover, there exist $F_1, F_2 \in \mathcal{C}^{\infty}(\mathcal{O},\mathbb{R})$, $F_1 (x,y)=y^2 / x^2$, $F_2 (x,y)=2 x/y$ such that 
$$
[X_1,X_2]=F_1 X_1 + F_2 X_2.
$$

\medskip
Hence, from Theorem \eqref{MT}, the functions $F_1,F_2, \mathcal{L}_{X_1}F_1, \mathcal{L}_{X_1}F_2, \mathcal{L}_{X_2}F_1, \mathcal{L}_{X_2}F_2 $ are first integrals of the dynamical system $\dot{\mathbf{x}}=X(\mathbf{x})$. After straightforward computations we obtain that all these first integrals are functional combinations of the first integral $H:=(1/2) F_2$.

\medskip
Since we analyze the problem from the dynamical point of view, we will restrict from now on to a fixed connected component of $\mathcal{O}$, e.g., $U=(0,\infty)\times (0,\infty)\subset \mathcal{O}$. 

\medskip
From Proposition \eqref{Pstr} we obtain that $\Pi_{X_1,X_2}= X_1\wedge X_2$ is a Poisson structure on $U$. More precisely, 
$$
\Pi_{X_1,X_2}(x,y)= - y^2 \partial_x \wedge \partial_y,
$$
for every $(x,y)\in U$. Consequently, from Proposition \eqref{PBRT}, the associated Poisson bracket, $\{\cdot,\cdot\}_{\Pi_{X_1,X_2}}$, is given by
\begin{align*}
\{F,G\}_{\Pi_{X_1,X_2}}(x,y)=-y^2(\partial_x F \partial_y G -\partial_y F \partial_x G),
\end{align*}
for any $F,G\in \mathcal{C}^{\infty}(U,\mathbb{R})$.

Applying the Theorem \eqref{PVF} to the dynamical system, $\dot{\mathbf{x}}=X(\mathbf{x})$, one obtains that $X$ is a Poisson vector filed on the Poisson manifold $(U,\Pi_{X_1,X_2})$. Indeed, one can also verify directly that the Poisson structure $\Pi_{X_1,X_2}$ is constant along the flow of $X$, i.e., $\varphi_{t}^{X}:(x,y)\in U \mapsto \varphi_{t}^{X}(x,y)=(e^{t}x,e^{t}y)\in U$, for every $t\in\mathbb{R}$, since 
\begin{align*}
\Pi_{X_1,X_2}(e^{t}x,e^{t}y)&= - \left( e^{t}y \right)^{2} \partial_{(e^{t}x)} \wedge \partial_{(e^{t}y)}=-e^{2t}y^{2}e^{-2t}\partial_{x} \wedge \partial_{y}= - y^2 \partial_x \wedge \partial_y \\
&= \Pi_{X_1,X_2}(x,y),
\end{align*}
for every $(x,y)\in U$ and respectively every $t\in \mathbb{R}$.

Since $\operatorname{rank}\Pi_{X_1,X_2}(x,y)=2$, for any $(x,y)\in U$, the Poisson structure is of full rank on $U$, and hence it defines a symplectic structure, given by $$\omega_{X_1,X_2}(x,y)= - y^{-2} dx\wedge dy,$$ for any $(x,y)\in U$. Consequently $H^{1}_{\Pi_{X_1,X_2}}(U)\cong H^{1}_{dR}(U)=0$, and hence the Poisson vector field $X$ must be also a Hamiltonian vector field on the Poisson manifold $(U,\Pi_{X_1,X_2})$. Indeed, if one consider $H(x,y)=x/y$, then from Proposition \eqref{HVF} we obtain that $X_H = X$.

Note that $X$ is also a symplectic Hamiltonian vector field on the symplectic manifold $(U,\omega_{X_1,X_2})$ associated with the Hamiltonian $H(x,y)=x/y$, since $i_X \omega_{X_1,X_2}=dH$.
\end{example}

\subsection*{Acknowledgment}
This work was supported by a grant of the Romanian National Authority for Scientific Research, CNCS-UEFISCDI, project number PN-II-RU-TE-2011-3-0103.


\noindent {\sc R.M. Tudoran}\\
West University of Timi\c soara\\
Faculty of Mathematics and Computer Science\\
Department of Mathematics\\
Blvd. Vasile P\^arvan, No. 4\\
300223 - Timi\c soara, Rom\^ania.\\
E-mail: {\sf tudoran@math.uvt.ro}\\
\medskip

\end{document}